\newcommand{\re}[1]{\xrightarrow{#1}}
\newcommand{\rp}[1]{\overset{#1}{\rightsquigarrow}}
\newcommand{\tand}{\text{ and }}
\newcommand{\tor}{\text{ or }}
\newcommand{\R}{\mathbb R}
\newcommand{\N}{\mathbb N}
\newcommand{\Z}{\mathbb Z}
\newcommand{\Rbar}{\overline{\R}}
\renewcommand{\mp}{\text{mp}}
\DeclareMathOperator{\summ}{sum}
\newcommand{\avg}{\text{avg}}
\title{Positionality of mean-payoff games on infinite graphs} 
\author{Pierre Ohlmann}{University of Warsaw}{}{}{}
\authorrunning{P. Ohlmann} 
\keywords{Dummy keyword} 
\begin{document}

\maketitle


\subparagraph*{Context} 
The mean-payoff objective was introduced by Ehrenfeucht and Mycielski~\cite{EM79} who first proved its positionality over finite game graphs.
By now, many additional proofs of positionality are available, using the GKK algorithm~\cite{GKK88}, the reduction to discounted games~\cite{Puri95,ZP96}, first-cycle games~\cite{BSV04b,AR17}, concavity~\cite{Kopczynski06}, or the 1-to-2 player lift of~\cite{GZ05}.
All these proofs have in common that finiteness of the game graph is required.
In fact, if defined without care, mean-payoff games fail to be positional over infinite graphs (even with degree 2), as shown in Figure~\ref{fig:example}.

\begin{figure}[h]
\begin{center}
\includegraphics[width=0.8\linewidth]{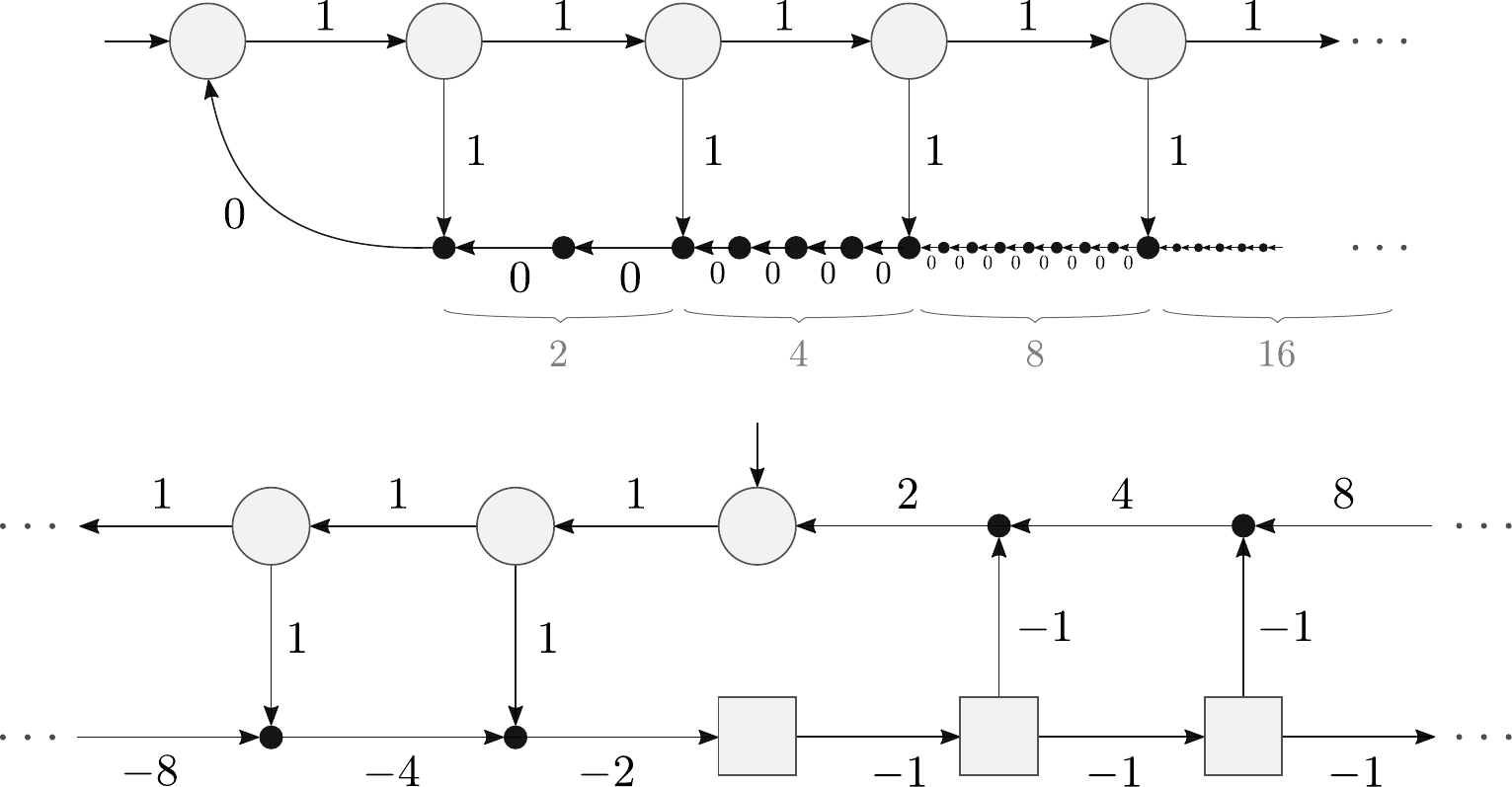}
\end{center}
\caption{Two infinite games of finite degree witnessing failure of positionality for different variants of the mean-payoff objective. Circles are controlled by Eve, the minimizer, while squares are controlled by the opponent. In the game displayed at the top, Eve requires non-positional strategies to ensure a path whose weights $w_0 w_1 \dots$ satisfy $\limsup_n \frac 1 n \sum_{i=0}^{n-1} w_i \leq 0$. In the game at the bottom, Eve requires non-positional strategies to ensure a path whose weights $w_0 w_1 \dots$ satisfy $\liminf_n \frac 1 n \sum_{i=0}^{n-1} w_i \leq 0$, or that the same quantity is $<0$.}\label{fig:example}
\end{figure}

\subparagraph*{Result}
In this short note, we establish that, if the mean-payoff objective is defined adequately, positionality (for Eve) is recovered over arbitrary game graphs.
To do so, we define a well-founded monotone graph which is universal for the mean-payoff objective, and then rely on~\cite[Theorem 3.2]{Ohlmann23}.
For all definitions and notations relative to graphs, games, positionality, universality and monotonicity, we refer to~\cite{Ohlmann23}.

\subparagraph*{Mean-payoff}

We define the mean-payoff function to be
\[
    \begin{array}{lrcl}
        \mp : &\Z^\omega &\to& \Rbar \\
        &w_0w_1 \dots & \mapsto & \limsup_n \frac 1 n \sum_{i=0}^{n-1} w_i
    \end{array}
\]
and the (threshold) mean-payoff objective to be
\[
    W = \{w \in \Z^\omega \mid \mp(w) < 0\}.
\]
Note that the three other variants (changing $\limsup$ for $\liminf$ and/or $<0$ for $\leq 0$) are non-positional (see Figure~\ref{fig:example}), even over graphs of degree 2.

\subparagraph*{A well-founded monotone graph}

Consider the graph $U$ over $V(U) = \N^{\geq 1} \times \N$, ordered lexicographically (with the first coordinate as most important) and given by
\[
    (m,t) \re w (m',t') \in E(U) \quad \iff \quad m>m' \tor \big[m=m' \tand m w \leq t-t'-1 \big].
\]
Clearly, $U$ is a well-founded monotone graph.

\begin{lemma}
The graph $U$ satisfies the objective $W$.
\end{lemma}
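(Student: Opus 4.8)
The plan is to unfold what it means for $U$ to satisfy $W$: every infinite path in $U$ must produce a weight sequence lying in $W$, i.e.\ one whose $\mp$-value is strictly negative. So I would fix an arbitrary infinite path
\[
    (m_0,t_0) \re{w_0} (m_1,t_1) \re{w_1} (m_2,t_2) \re{w_2} \cdots
\]
in $U$ and aim to show $\limsup_n \frac 1 n \sum_{i=0}^{n-1} w_i < 0$.

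First I would analyse the first coordinate. By definition of $E(U)$, each edge forces either $m_i > m_{i+1}$ or $m_i = m_{i+1}$, so the sequence $(m_i)_i$ is non-increasing; since it lives in $\N^{\geq 1}$ it is bounded below by $1$ and must stabilise, giving an index $N$ and a value $m \geq 1$ with $m_i = m$ for all $i \geq N$. Past $N$ the first disjunct $m_i > m_{i+1}$ is impossible, so every such edge uses the second disjunct, namely $m w_i \leq t_i - t_{i+1} - 1$, equivalently $w_i \leq (t_i - t_{i+1} - 1)/m$.

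The core of the argument is then a telescoping estimate on the tail. Summing this inequality over $i = N, \dots, n-1$ collapses the $t$-differences, yielding
\[
    \sum_{i=N}^{n-1} w_i \;\leq\; \frac{(t_N - t_n) - (n-N)}{m} \;\leq\; \frac{t_N - (n-N)}{m},
\]
where the last step uses $t_n \geq 0$ to discard the $-t_n$ term. Dividing by $n$ and adding the fixed prefix $\frac 1 n \sum_{i=0}^{N-1} w_i$, which tends to $0$, the right-hand side converges to $-1/m$; hence $\limsup_n \frac 1 n \sum_{i=0}^{n-1} w_i \leq -1/m < 0$, as required.

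All the steps are elementary; the only delicate point — and exactly why the construction works — is the $-1$ in the edge condition. It is this additive slack that, after telescoping, contributes the linear term $-(n-N)/m$ and pushes the average \emph{strictly} below $0$ rather than merely to $\leq 0$, in line with the fact noted above that the non-strict and $\liminf$ variants fail to be positional. I would take care to stress that the non-negativity of $t_n$ is what licenses dropping $-t_n$, and that the stabilisation of $m$ is guaranteed precisely by well-foundedness of $\N^{\geq 1}$.
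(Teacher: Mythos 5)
Your proof is correct and follows essentially the same route as the paper: stabilisation of the first coordinate (by well-foundedness / monotonicity of $(m_i)_i$), then telescoping the edge inequality $m w_i \leq t_i - t_{i+1} - 1$ and using $t_n \geq 0$ to bound the limsup by $-1/m < 0$. No gaps.
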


\begin{proof}
Consider an infinite path $(m_0,t_0) \re{w_0} (m_1,t_1) \re{w_1} \dots$ and let $w=w_0w_1\dots$.
By well-foundedness, there is $i_0$ such that $(m_i)_{i\geq i_0}$ is constant, say, equal to $m \geq 1$.
Then by definition of $U$, for all $i \geq i_0$, we have $m w_i \leq t_i - t_{i+1} - 1$ which leads to
\[
    \mp(w) = \limsup_n \frac 1 n \Big[\sum_{i=0}^{i_0-1} w_i + \frac{t_{i_0} - t_{n} - (n - i_0)} m \Big] \overset{(*)}\leq 
    - \frac 1 m < 0,
\]
where $\overset{(*)}\leq$ holds since for all $n$, $t_n \geq 0$.
\end{proof}

\subparagraph*{Proof of universality}

We now prove that for any cardinal $\kappa$, $U$ is almost $\kappa$-universal for $W$, which implies, by~\cite[Lemma 4.5]{Ohlmann23}, that $U \cdot \kappa$ is $\kappa$-universal for $W$.
Below, $G[v]$ denotes the restriction of a graph $G$ to vertices reachable from $v$ in $G$.

\begin{lemma}
Let $G$ be a graph satisfying $W$.
There exists $v \in V(G)$ such that $G[v] \to U$.
\end{lemma}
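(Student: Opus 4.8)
The plan is to use that, within a fixed first coordinate $m$, the edge relation of $U$ is nothing but a shifted energy inequality: the edge $(m,t)\re{w}(m,t')$ exists iff $mw+1\le t-t'$. Hence mapping a set of vertices into a single slice $\{m\}\times\N$ of $U$ amounts to equipping them with a nonnegative integer ``energy'' that drops by at least $mw+1$ along each edge of label $w$. I will therefore look for a single vertex $v$ and an integer $m\ge 1$ for which such an energy exists throughout $G[v]$, and take the largest one,
\[
    \mathrm{En}_m(x)=\sup\Big\{\textstyle\sum_{i<n}(mw_i+1)\ \Big|\ x=x_0\re{w_0}x_1\re{w_1}\cdots\re{w_{n-1}}x_n\text{ a finite path in }G[v]\Big\}\ \in\ \N\cup\{\infty\},
\]
where the empty path ($n=0$) is included, so that $\mathrm{En}_m(x)\ge 0$. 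Note that if $\mathrm{En}_m(v)<\infty$ then $\mathrm{En}_m(x)<\infty$ for every $x\in V(G[v])$, since a fixed path from $v$ to $x$ can be prepended to any path realising a large value at $x$; thus finiteness at $v$ alone suffices.

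Granting finiteness, the verification that $\phi:x\mapsto(m,\mathrm{En}_m(x))$ is a morphism $G[v]\to U$ is routine. Each value lies in $\N^{\ge 1}\times\N=V(U)$, finiteness makes the supremum a maximum (hence an integer), and for an edge $x\re{w}y$ of $G[v]$, prepending this edge to the paths out of $y$ gives $\mathrm{En}_m(x)\ge (mw+1)+\mathrm{En}_m(y)$, that is $mw\le \mathrm{En}_m(x)-\mathrm{En}_m(y)-1$, which is exactly the case $m=m'$ of the definition of $E(U)$. (Vertices of out-degree $0$ are harmless: there $\mathrm{En}_m=0$, and $G[v]$ is a single point.) So the whole argument reduces to producing $v$ and $m$ with $\mathrm{En}_m(v)<\infty$.

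This existence is the main obstacle, and I would establish it by contradiction. Suppose $\mathrm{En}_m(x)=\infty$ for \emph{every} vertex $x$ and \emph{every} $m\ge 1$; I claim one can then build an infinite path whose label $w$ satisfies $\mp(w)\ge 0$, contradicting that $G$ satisfies $W$. The construction proceeds in phases governed by an increasing sequence $m_1<m_2<\cdots$ tending to $\infty$. Maintaining the running sum $\Sigma$ and length $n$, I track in phase $k$ the shifted potential $m_k\Sigma+n$; since $\mathrm{En}_{m_k}$ is infinite at the current endpoint, I can always append a finite path that increases this potential (by an arbitrarily large amount), so the potential can be kept nondecreasing while $n$ is driven past any prescribed threshold $n_k$. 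At the end of phase $k$ one gets $m_k\Sigma+n\ge 0$ up to the (bounded) potential inherited from the previous phase, whence the running average $\Sigma/n\ge -1/m_k - o(1)$; taking each $n_k$ large enough relative to the previous data makes this error negligible, so the running average at the end of phase $k$ exceeds $-2/m_k$. As $m_k\to\infty$ this forces $\limsup$ of the averages to be $\ge 0$, the desired contradiction.

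The noteworthy feature of this last step is that it never invokes König's lemma, so it applies to graphs of arbitrary (possibly infinite) out-degree: the ``bad'' path is assembled greedily from the witnessing excursions rather than extracted from a finitely branching tree. Finally, this is exactly where the weakening to \emph{almost} universality is essential — some vertices are genuinely non-mappable into $U$ (for instance a vertex from which one reaches cycles of mean weight tending to $0$, which would require an unbounded first coordinate), and these are absorbed only at the level of the product $U\cdot\kappa$ via~\cite[Lemma 4.5]{Ohlmann23}; our task here is solely to exhibit \emph{one} good vertex, which the above provides.
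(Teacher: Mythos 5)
Your proof is correct and essentially the same as the paper's: your $\mathrm{En}_m(x)$ is exactly the paper's minimal $t_x$ from its Claims~\ref{cl:3} and~\ref{cl:4} (since $\sum_{i<n}(mw_i+1)=m\,\summ(\pi)+\ell$, finiteness of the energy is precisely the affine bound $\summ(\pi)\leq(-\ell+t)/m$), the morphism is the same map, and your contradiction argument producing $v,m$ with finite energy is the paper's greedy construction of a path whose running averages exceed $-O(1/m_k)$ with $m_k\to\infty$. The only cosmetic difference is that you verify the edge inequality directly from superadditivity of the supremum rather than by subtracting two bounds and invoking integrality.
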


\begin{proof}
    Given a finite path $\pi$, we use $\summ(\pi)$ and $\avg(\pi)$ to denote respectively the sum or the average of the weights appearing on $\pi$.
    The proof hinges on the following claim.

\begin{claim}\label{cl:3}
    There exists $v \in V(G)$, $m \geq 1$ and $t \in \N$ such that for all finite paths $\pi$ of length $\ell$ from $v$ we have
    \[
        \summ(\pi) \leq \frac {-\ell + t} m.
    \]
\end{claim}

The statement of Claim~\ref{cl:3} is illustrated in Figure~\ref{fig:claim3}.

\begin{figure}[h]
\begin{center}
\includegraphics[width=0.5\linewidth]{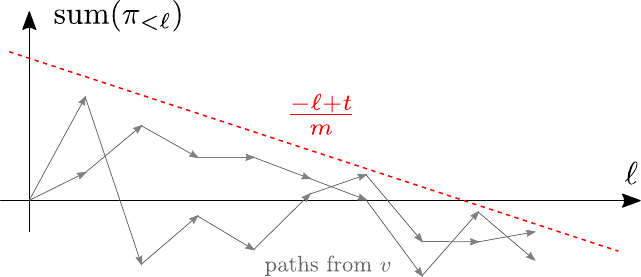}
\end{center}
\caption{Constraining all paths below some affine line with negative slope, as in the statement of Claim~\ref{cl:3}.}\label{fig:claim3}
\end{figure}

\begin{claimproof}
We prove the claim by contradiction, so assume that for all $v \in V(G)$ and all $m \geq 1$ and $t \in \N$, there is a finite path $\pi$ of length $\ell$ from $v$ such that
\[
    \summ(\pi) > \frac {-\ell+t} m. \tag{$*$}
\]
Pick $v_0 \in V(G)$; we aim to construct a path from $v_0$ in $G$ with mean-payoff $\geq 0$, which contradicts the fact that $G$ satisfies $W$.
It will be of the form
\[
    v_0 \rp {\pi_0} v_1 \rp{\pi_1} \dots,
\]
where each $\pi_i$ is non-empty.
For $n \in \N$, we let $\pi_{<n}$ denote the concatenation of $\pi_0,\dots,\pi_{n-1}$ (which is the empty path for $n=0$).
The idea is to ensure that at the $n$-th step, the average of the partial sum on our path exceeds $-1/(n+1)$:
\[
    \avg(\pi_{<n}) \geq - \frac{1}{n+1}.
\]
This implies our claim: since averages of partial sums have a subsequence lower bounded by one that goes to 0, the limsup is $\geq 0$.

Let $n\in \N$ and assume the path constructed up to $v_n$ (this is trivially verified for $n=0$).
Apply $(*)$ to obtain a finite path $\pi_n : v_n \rp {} v_{n+1}$ of length $\ell$ satisfying
\[
    \summ(\pi_n) \geq - \frac{\ell}{n+1} + \max\Big(-\summ(\pi_{<n}) - \frac{\ell'}{n+1},1\Big), 
\]
where $\ell'=|\pi_{<n}|$.
Note that since the $\max$ is $>0$, $\pi_n$ must be non-empty (this is the only purpose of the $\max $ above). 
Then we get
\[
    \avg(\pi_{<n+1}) = \frac {\summ(\pi_{<n}) + \summ(\pi_n)} {\ell'+\ell} \geq \frac {- \frac{\ell'}{n+1} - \frac \ell {n+1}} {\ell'+\ell} \geq -\frac{1}{n+1},
\]
as required.
This concludes the proof of the claim.
\end{claimproof}

Thus we take $v \in V(G),m \geq 1$ and $t \in \N$ as given by the above claim.

\begin{claim}\label{cl:4}
For any $v'$ reachable from $v$ in $G$, there exists $t'$ such that for all finite paths $\pi'$ of length $\ell'$ from $v'$ we have
\[
    \summ(\pi') \leq \frac {-\ell' + t'}{m}.
\]
\end{claim}

\begin{claimproof}
Fix a path $v \rp {\pi} v'$ of length $\ell$ and let $t'=\lceil t - \ell - m \summ(\pi) \rceil$.
Then we get

\[
        \summ(\pi') \leq \summ(\pi \pi') - \summ(\pi) 
         \leq \frac{ - (\ell + \ell') + t - m \summ(\pi)} m \leq \frac{ - \ell' + t'} m,
\]
as required.
\end{claimproof}

Now, for each $v'$ reachable from $v$ (including $v$ itself), we define $t_{v'}$ to be the minimal $t'$ as in Claim~\ref{cl:4}, and define a map $\phi: V(G[v]) \to V(U)$ by setting $\phi(v') = (m,t_{v'})$.

\begin{claim}
The map $\phi$ defines a morphism $G[v] \to U$.
\end{claim}

\begin{claimproof}
Let $u \re w u'$ be an edge in $G[v]$, we must prove that $(m,t_{u}) \re w (m,t_{u'})$ is an edge in $U$, which rewrites as
\[
   m w \leq t_{u} - t_{u'} - 1.
\]
By minimality of $t_{u'}$, there exists a finite path $\pi'$ of length $\ell'$ from $u'$ satisfying
\[
    \summ(\pi') > \frac{- \ell' + t_{u'} -1} m. \tag{1}
\]
But then $(u \re w u') \pi'$ defines a path of length $\ell'+1$ from $u$, and weight $w + \summ(\pi')$ therefore by definition of $t_u$ we get
\[
    w + \summ(\pi') \leq \frac{ -(\ell'+1) + t_u} m. \tag{2}
\]
Subtracting $(1)$ from $(2)$ yields
\[
    w < \frac{ -(\ell'+1) + t_u + \ell' - t_{u'} +1} m = \frac{t_u-t_{u'}} m.
\]
Since $mw < t_u - t_{u'}$ and these are integers, we get $mw \leq t_u - t_{u'} -1$, as required.
\end{claimproof}
This concludes our proof.
\end{proof}

\begin{remark}
The reader may wonder what happens when real weights are allowed, instead of just integers.
We claim that the exact same construction remains universal, which proves positionality.
The proof of universality becomes slightly more subtle, the idea is to take $v,m,t$ as in Claim~\ref{cl:3}, then continue the proof with $2m$ instead of $m$, and by rounding up the weights within $\frac \Z {2m}$.
The move from $m$ to $2m$ gives some extra slack that compensates for the loss from the rounding.
To keep the note short, we do not include a detailed proof.
\end{remark}

\bibliography{bib}
\end{document}